 \newcommand\jw{j\omega}
\newcommand\dist[2]{\mathrm{dist}\rbkt{#1, #2}}
\newcommand\sbkt[1]{\left[#1\right]}
\newcommand\rbkt[1]{\left(#1\right)}
\newcommand\ininf[2]{\langle #1\,, #2 \rangle}
\newcommand\rlinf{\mathcal{RL}_{\infty}}
\newcommand\rhinf{\mathcal{RH}_{\infty}}
\newcommand\ccp{\bar{\mathbb{C}}_+}
\newcommand\cop{{\mathbb{C}}_+}
\newcommand\cm{{\mathbb{C}}^m}
\newcommand\mm{{m\times m}}
\newcommand\cmm{{\mathbb{C}}^{m\times m}}
\newcommand\sg{\mathrm{SG}}
\newcommand\abs[1]{\left|#1\right|}
\newcommand\rep{{\rm Re}}
\newcommand{\norm}[1]{\left\lVert#1\right\rVert}
\newcommand{\tbt}[4]{\begin{bmatrix}#1&#2\\#3&#4\end{bmatrix}}
\newcommand{\tbo}[2]{\begin{bmatrix}#1\\#2\end{bmatrix}}
\newcommand{\stbt}[4]{\left[\begin{smallmatrix} #1&#2\\#3&#4\end{smallmatrix}\right]}
\newcommand{\stbo}[2]{\left[\begin{smallmatrix} #1\\#2\end{smallmatrix}\right]}
\newcommand{\bi}{\begin{itemize}}\newcommand{\ei}{\end{itemize}}
\newcommand{\be}{\begin{equation}}\newcommand{\ee}{\end{equation}}
\newcommand{\bex}{\begin{equation*}}\newcommand{\eex}{\end{equation*}}
\newcommand{\bax}{\begin{align*}}\newcommand{\eax}{\end{align*}}
\newcommand{\bc}{\begin{center}}\newcommand{\ec}{\end{center}}
\def\gof{\mbox{${P}\,\#\,C$}}
\def\goflti{\mbox{${P}\,\#\,C$}}
\newtheorem{theorem}{Theorem}
\newtheorem{proposition}{Proposition}
\newtheorem{definition}{Definition}
\newtheorem{lemma}{Lemma}
\newtheorem{remark}{Remark}
\newtheorem{example}{Example}
\newtheorem{problem}{Problem}
\newtheorem{corollary}{Corollary}
   \let\NAT@parse\undefined
\title{\LARGE \bf Graphical Dominance Analysis for Linear Systems: \\A Frequency-Domain Approach}
\author{Chao Chen, Thomas~Chaffey and Rodolphe Sepulchre,~\IEEEmembership{Fellow,~IEEE}   
\thanks{This work was supported by the European Research Council under the Advanced ERC Grant Agreement SpikyControl n.~101054323.} 
\thanks{Chao Chen is with the Department of Electrical and Electronic Engineering, The University of Manchester, UK. (chao.chen@manchester.ac.uk)}
\thanks{Thomas~Chaffey is with the School of Electrical and Computer Engineering, The University of Sydney, Australia. (thomas.chaffey@sydney.edu.au)}
\thanks{Rodolphe~Sepulchre is with the STADIUS center of Department of Electrical Engineering (ESAT), KU Leuven, Leuven, Belgium and the Department of Engineering, University of Cambridge, Cambridge, UK. (rodolphe.sepulchre@kuleuven.be)}}
\begin{document}
\maketitle
\pagestyle{empty}

\begin{abstract}
We propose a frequency-domain approach to dominance analysis for multi-input multi-output (MIMO) linear time-invariant systems. The dominance of a MIMO system is defined to be the number of its poles in the open right half-plane. Our approach is graphical: we define a frequency-wise notion of the recently-introduced scaled graph of a MIMO system plotted in a complex plane. The scaled graph provides a bound of the eigenloci of the system, which can be viewed as a robust MIMO extension of the classical Nyquist plot. Our main results characterize sufficient conditions for quantifying the dominance of a closed-loop system based upon separation of scaled graphs of two open-loop systems in a frequency-wise manner. The results reconcile existing small gain, small phase and passivity theorems for feedback dominance analysis.
\end{abstract}

\section{Introduction}

Behavior of a high-dimensional system may be approximated by that of a low-dimensional one. Such a property is known as dominance \cite{Forni:18, Miranda:18, Padoan:19, Padoan:21, Che:23}. The dominance of a system indicates the asymptotic behavior of the system quantitatively. A special class of dominant systems is the set of asymptotically stable, or $0$-dominant, systems. We refer the reader to the seminal work \cite{Forni:18} for a comprehensive look at differential dominance analysis of nonlinear systems. 

The focal point of this paper lies in \emph{robust dominance analysis} for multi-input multi-output (MIMO) linear time-invariant (LTI) systems via a frequency-domain approach. Our study is originally motivated from  intrinsic local robustness of complicated and rich behaviors in nonlinear biological systems and neural networks beyond local stability. A linearized model of such a network often admits unstable poles. It is desirable to \emph{preserve the local behavior} subject to uncertainties, see, for example, \cite{Hara:22}. A MIMO system $G\in \rlinf^\mm$ is called $p$-dominant if it has $p$ poles in the open right half-plane. The property of $p$-dominance measures the levels of instability of the system. The main benefit that comes with $p$-dominance is that classical notions of system gain and positive realness can be adapted to generic unstable systems quantitatively. In a nutshell, a single-input single-output (SISO) $p$-dominant system $G$ is called \emph{generalized positive real} \cite{Anderson:68, Brockett:67} if $\rep[G(\jw)] \geq 0$
for all $\omega \in \interval{-\infty}{\infty}$. Similarly,  $G$ is said to have a \emph{conditional gain} ($\mathcal{L}_\infty$-norm) $\gamma$ \cite{Takeda:73} if $\gamma = \sup_{\omega\in \interval{-\infty}{\infty}}  \abs{G(\jw)} <\infty$.

\begin{example}\label{ex: intro}
  Consider a simple $1$-dominant system $\frac{1}{1-s}$. It is generalized positive real and has a conditional gain one. Such a property can be easily demonstrated by sketching the Nyquist plot $\frac{1}{1-\jw}$ over $\omega\in \interval{-\infty}{\infty}$ that is contained in the intersection of the unit disk and right half-plane.
\end{example}

As a counterpart to stability analysis, instability analysis of feedback interconnections involving nonlinear subsystems characterized by generalized positive realness or conditional gains may be traced back to the 1960s \cite{Brockett:67, Willems:69Instability, Takeda:73, Willems:73, Hill:83, Georgiou:97_gain, Khong:24}. Brockett and Lee \cite{Brockett:67} developed an instability version of the celebrated circle criterion using generalized positive realness. Willems \cite{Willems:69Instability} adopted a noncausality approach to an instability counterpart of the conicity theorem. Takeda and Bergen \cite{Takeda:73} proposed the small-gain and passivity instability theorems based on orthogonal decompositions. A dissipativity approach was proposed by Willems \cite{Willems:73} with the notion of cyclo-dissipativity. This notion was subsequently taken up by Hill and Moylan for instability analysis \cite{Hill:83}. A differential dissipativity approach was proposed by Forni and Sepulchre for dominance analysis \cite{Forni:18} as quantitative instability analysis. Recently, robust instability analysis has been conducted in \cite{Khong:24} using the $\mathcal{L}_2$-gap metric and coprime factorizations. 

Robustness of feedback dominance is of great importance in LTI systems analysis and synthesis, an has not yet been well investigated. When dealing with SISO systems, robust feedback dominance can be conveniently analyzed using the Nyquist criterion. In this case the dominance of a positive feedback loop can be inferred from the number of encirclements of the critical point ``$1$''. The distance between ``$1$'' and the Nyquist plot of the feedback loop serves as an important \emph{robustness indicator} for feedback dominance, termed the dominance margin \cite{Padoan:19}. A related notion of exact instability margin for SISO systems was proposed in \cite{Hara:24}.

For MIMO systems, however, there does not exist an obvious and widely-accepted \emph{graphical} tool for robust analysis. The generalized Nyquist plot constructed from the system's eigenloci is a celebrated tool developed in the late 1970s \cite{Desoer:80}. Nevertheless, the eigenloci are known to be neither good robust stability nor robust performance indicators of feedback systems \cite{Maciejowski:89}. One possible remedy to the robustness issue is to adopt the Nquist array with Gershgorin bands \cite[Th.~2.11]{Maciejowski:89}. The principal region \cite{Postlethwaite:81} offers another solution for stable MIMO systems, which is shaped like a frequency-wise curvilinear rectangle. The rectangle, integrated from the principal gain and principal phase notions~\cite{Postlethwaite:81}, provides a bound of the eigenloci. Recently, the operator-theoretic notion of scaled relative graph \cite{Ryu:21} was proposed in \cite{Chaffey:21c, Chaffey:21j} for stable nonlinear systems analysis. The brand-new notion mixes gain and phase values \cite{Chen:21_angle} into a complex scalar, reminiscent of the classical Nyquist plot. The operator-theoretic notion opens the door for revisiting appropriate graphical tools for robust analysis of MIMO systems.

In this paper, inspired by \cite{Chaffey:21j}, we propose a frequency-wise notion of the scaled graph for $p$-dominant MIMO systems that are possibly unstable. The proposed scaled graph contains both gain and phase information of the frequency response matrix of a system, which provides a new frequency-wise bound of the system's eigenloci. It can be used as a robustness measure of $p$-dominance. Our definition is shown to be less conservative than the original definition in \cite{Chaffey:21j}. The main result of this paper concerns dominance analysis of feedback systems: a sufficient condition on separation of the scaled graphs of open-loop systems is provided, guaranteeing dominance preservation of the closed-loop system. We then provide a numerical example for dominance analysis which demonstrates that using the main result can be less conservative than using a result involving the principal region \cite{Postlethwaite:81}.
 
The rest of this paper is organized as follows. In Section~\ref{sec: dominance},  a problem formulation for feedback dominance analysis is proposed. In Section~\ref{sec: SRG}, the scaled graph of matrices is introduced and a matrix separation theorem is presented. We next employ the matrix scaled graph results to MIMO systems in a frequency-wise manner in Section~\ref{sec: systemSRG}. Our main result, a graphical separation condition to guarantee feedback dominance, is presented in Section~\ref{sec: main}. We also specialize the main result to recover several existing results for feedback stability and dominance. Section~\ref{sec:conclusion} concludes this paper. 

\emph{Notation}: Denote by $\cop$ (resp. $\mathbb{C}_-$) the open complex right (resp. left) half-plane. Let $\ccp$ denote the closed right half-plane. Denote by $\mathcal{R}^{\mm}$ the set of $m\times m$ real-rational proper transfer function matrices. Let $\rlinf^{\mm}$ denote the subset of $\mathcal{R}^\mm$ consisting of matrices with no poles on the imaginary axis  $j\mathbb{R}$. The closed subspace of $\rlinf^\mm$ consisting of stable matrices with no poles in $\ccp$ is denoted as $\rhinf^\mm$.  The distance between a point $z\in \mathbb{C}$ and a set $\mathcal{X}\subset \mathbb{C}$ is defined as 
$\dist{z}{\mathcal{X}} = \inf_{x\in \mathcal{X}} \abs{z-x}$.

\section{Dominance of MIMO Systems and 
Problem Formulation}\label{sec: dominance}

\subsection{Definitions}

Dominance of a MIMO system characterizes the level of instability of the system by counting its number of (hyperbolic or strongly) unstable poles as defined below.

\begin{definition}\label{def: open-loop_dominance}
A system $G\in \rlinf^{m\times m}$ of McMillan degree $n$ is said to be \emph{$p$-dominant} if 
\bex 
p~=~\text{the number of poles of}~G ~\text{in}~\cop
\eex 
where $p\in \interval{0}{n}$.
\end{definition}

In the case that $G$ admits a minimal state-space realization $(A, B, C, D)$, an equivalent characterization of Definition~\ref{def: open-loop_dominance} is that the state matrix $A$ has $p$ eigenvalues in $\cop$ and $n-p$ eigenvalues in $\mathbb{C}_-$. Such a state-space characterization was proposed in \cite[Def.~1]{Forni:18} as an original form for dominant systems. One notable specialization of Definition~\ref{def: open-loop_dominance} is asymptotic stability interpreted as $0$-dominance. Specifically, $G\in \rlinf^{m\times m}$ is $0$-dominant if and only if $G\in \rhinf^{m\times m}$. The dominance of a system, as its name implies, indicates the system's asymptotic behavior \cite{Forni:18}. For example, consider a $p$-dominant system $G\in \rlinf^{m\times m}$ whose poles are distinct and decompose the system as $G(s) = \sum_{k=1}^n \frac{H_k}{s-q_k}$, where $H_k$ is the residue matrix with respect to the pole $q_k$. The asymptotic behavior of its impulse response $g(t)$ is determined by $p$ persistent modes, that is, 
$\lim_{t\to \infty} g(t) = \sum_{k=1}^{p}  H_ke^{q_k t}$.

\begin{figure}[htb]
  \centering
  \setlength{\unitlength}{1mm}
  \begin{picture}(50,25)
  \thicklines \put(0,20){\vector(1,0){8}} \put(10,20){\circle{4}}
  \put(12,20){\vector(1,0){8}} \put(20,15){\framebox(10,10){$P$}}
  \put(30,20){\line(1,0){10}} \put(40,20){\vector(0,-1){13}}
  \put(38,5){\vector(-1,0){8}} \put(40,5){\circle{4}}
  \put(50,5){\vector(-1,0){8}} \put(20,0){\framebox(10,10){$C$}}
  \put(20,5){\line(-1,0){10}} \put(10,5){\vector(0,1){13}}
  \put(5,10){\makebox(5,5){$y_2$}} \put(40,10){\makebox(5,5){$y_1$}}
  \put(0,20){\makebox(5,5){$w_1$}} \put(45,0){\makebox(5,5){$w_2$}}
  \put(13,20){\makebox(5,5){$u_1$}} \put(32,0){\makebox(5,5){$u_2$}}
  \end{picture}\caption{A positive feedback system $\goflti$.} \label{fig:feedback}
\end{figure}

The main focus of this paper is dominance analysis of feedback interconnected systems from a frequency-domain perspective. Consider a positive feedback system $\goflti$ illustrated by Fig.~\ref{fig:feedback}, where $P\in \mathcal{R}^\mm$ and $C\in \mathcal{R}^\mm$.  A feedback system $\gof$ is said to be \emph{well-posed} \cite[Lem.~5.1]{Zhou:96} if the matrix $I-P(\infty)C(\infty)$ has full rank. The dominance of the feedback system can be defined below. 
  
\begin{definition}\label{def: feedback_dominance}
  A feedback system $\goflti$ is said to be \emph{$p$-dominant} if it is well-posed and the transfer function matrix  
  \begin{align*}
    \tbt{I}{-C}{-P}{I}^{-1} \coloneqq \tbo{w_1}{w_2}\mapsto  \tbo{u_1}{u_2} 
  \end{align*}
 belongs to $\rlinf^{2m\times 2m}$ and is $p$-dominant.
\end{definition}

\subsection{Problem Formulation}

Over the past half-century, \emph{robust stability} for feedback systems has been extensively studied \cite{Zhou:96}. A simple problem formulation of robust stability analysis can be stated as follows: Given systems $P, C\in \rhinf^{\mm}$, find a condition on $P$ and $C$ such that the feedback system $\gof$ remains stable. The celebrated small gain theorem \cite[Sec.~9.2]{Zhou:96} has been acknowledged to be the most important tool in robust control theory. A frequency-wise version of the small gain condition is given by 
\begin{align}\label{eq: small_gain}
  \overline{\sigma}(P(\jw)) \overline{\sigma}(C(\jw))<1
\end{align}
for all $\omega\in \interval{-\infty}{\infty}$, where $\overline{\sigma}(\cdot)$ denotes the largest singular value of a matrix. Robustness of \eqref{eq: small_gain} can be interpreted in the scenario that $C$ is an \emph{uncertain} system characterized by a gain-bounded uncertainty set:
\begin{align*}\label{eq: gain_uncertainty}
\mathcal{B}_{\delta} = \{\Delta \in \rhinf^{\mm}\mid \overline{\sigma}(\Delta(\jw))\leq \delta(\omega),  \omega\in \interval{-\infty}{\infty} \},
\end{align*}
where $\delta:\interval{-\infty}{\infty} \to \interval[open right]{0}{\infty}$ is a given frequency-wise gain bound. The set $\mathcal{B}_{\delta}$ here describes the ``ball''-type uncertainties. Given $C\in   \mathcal{B}_{\delta}$, condition~\eqref{eq: small_gain} can be adapted to a gain constraint on $P$, that is, $\overline{\sigma}(P(\jw)) < \sbkt{\delta(\omega)}^{-1}$ for all $\omega\in \interval{-\infty}{\infty}$.

We are interested in \emph{robust dominance analysis} of feedback systems. It shares the same flavor with the robust stability analysis mentioned above. Such an analysis is largely inspired from  intrinsic local robustness of rich behaviors in nonlinear biological networks beyond stability. A local model for such a network can naturally be a $p$-dominant system, i.e., $p$ unstable poles in $\cop$. We envision that in this scenario the local dominance or instability should be robust. Conducting local robustness analysis of the network under stable unmodeled dynamics or uncertainties $C$ requires an \emph{explicit answer} to the following question: 

Given $p$-dominant $P\in \rlinf^{\mm}$ and $C\in \rhinf^{\mm}$, when does the feedback system $\gof$ remain $p$-dominant? 

More generally, one may remove the stability hypothesis on $C$ and further quantify $C$ by its dominance. To this end, we can formulate the following problem which is at the core of robust dominance analysis in this paper. 

\begin{problem}\label{problem}
Given $p_1$-dominant $P\in \rlinf^{\mm}$ and $p_2$-dominant  $C\in \rlinf^{\mm}$, find a criterion on $P$ and $C$ such that the feedback system $\gof$ is $(p_1+p_2)$-dominant.
\end{problem}

Our solution (Theorem~\ref{thm:lti_instability}) to Problem~\ref{problem} will be developed in a \emph{graphical} and \emph{frequency-wise} sense. With a frequency-domain approach, a crucial step toward the solution is to determine whether $\det(I-AB)\neq 0$ for given $A, B\in \cmm$. We will examine the invertibility problem of $I-AB$ via the recent notion of matrix scaled graph elaborated below. 

\section{The Scaled Graph of Matrices}\label{sec: SRG}
For $x, y\in \cm$, denote the real inner product by $\ininf{x}{y}\coloneqq \mathrm{Re}(x^* y)$ and the Euclidean norm by $\norm{x}_2\coloneqq\sqrt{x^*x}$. Given $x, y\in \cm$,  define the  \textit{gain} $\gamma(x, y)\in \interval{0}{\infty}$ from $x$ to $y$ by
\bex
\gamma(x, y)\coloneqq \frac{\norm{y}_2}{\norm{x}_2}\quad \text{if}~x \neq 0
\eex 
and $\gamma(x, y)= \infty$, otherwise. Moreover, define the \textit{phase} $\theta(x, y)\in \interval{0}{\pi}$ from $x$ to $y$ by
\bex 
\theta(x, y)\coloneqq \arccos \dfrac{\ininf{x}{y}}{\norm{x}_2\norm{y}_2}\quad \text{if}~x, y\neq 0,
\eex
and $\theta(x,y)=\emptyset$, otherwise. Notice that $\gamma(y, x)=1/\gamma(x, y)$ and $\theta(x, y)=\theta(y, x)$ for all nonzero $x, y\in \cm$.

We next introduce a recent graphical tool called the scaled graph \cite{Chaffey:21j, Ryu:21, Pates:21} for visualizing a matrix in the complex plane. For a matrix $A\in \cmm$, the \emph{scaled graph} $\sg(A)\subset \bar{\mathbb{C}}$ is defined to be 
\begin{multline}\label{eq: matrix_sg} 
\scriptsize
  \sg(A)\coloneqq \left\{ z=\gamma(x, Ax)  e^{\pm j\theta(x, Ax)} \in \bar{\mathbb{C}} \mid \right. \\
  \left. \vphantom{e^{\pm j\theta(x, Ax)}} 0\neq x\in \cm,  Ax \neq 0 \right\}.
\end{multline}
Note that $\sg(A)$ contains the gain $\gamma(\cdot, \cdot)$ and phase $\theta(\cdot, \cdot)$ information of the input-output pairs of $A$ and it is symmetric with respect to the real axis due to the term ${\pm j \theta(\cdot, \cdot)}$. It has been shown in \cite[Th.~1]{Pates:21} that the scaled graph has a close connection with a well-known graphical notion for matrices, the numerical range. This gives an algorithmic method for plotting a scaled graph, by using existing algorithms to plot a related numerical range.  

By swapping the role of input-output pairs in contrast to $\sg(A)$, we define the \textit{inverse scaled graph} $\sg^{\dagger}(A)$ as 
 \begin{multline}\label{eq: matrix_isg} 
\sg^{\dagger}(A)\coloneqq \left\{z=\gamma(Ax, x) e^{\pm j \theta(Ax, x)} \in \bar{\mathbb{C}} \mid  \right. \\
\left. \vphantom{  e^{\pm j \theta(Ax, x)}} 0\neq x\in \cm,  Ax\neq 0 \right\}.
 \end{multline}

The matrix scaled graph is invariant to unitary similarity transformations, that is,
\begin{equation}\label{eq: invariance}
  \sg(U^*A U) = \sg(A)
\end{equation}
 for any unitary matrix $U\in \cmm$. A simple proof reads as follows:
 \begin{align*}
  \gamma(x, U^* A Ux)&=\gamma(Ux, A Ux)  =\gamma(y, Ay), \\ 
  \theta(x, U^* A Ux)&= \theta(Ux, AUx)  =\theta(y, Ay),
 \end{align*}
 where $y\coloneqq Ux \in \mathbb{C}^m$ can be arbitrary. Consider the Schur decomposition $A=U^* T U$, where $T\in \cmm$ is upper triangular. In light of the invariance property in \eqref{eq: invariance}, note that plotting $\sg(A)$ boils down to plotting $\sg(T)$. 
 
Our definition \eqref{eq: matrix_sg} of scaled graphs is different from the original definition in \cite{Ryu:21} in the sense that we adopt the complex space $\cm$ rather than $\mathbb{R}^m$. Our definition coincides with the definition proposed in \cite{Pates:21}. Adopting the complex space is more natural for the purpose of frequency-domain analysis of LTI systems in Section~\ref{sec: systemSRG}. Another  benefit that comes with definition \eqref{eq: matrix_sg} is that the scaled graph of a matrix provides a bound for all its nonzero eigenvalues, regardless of the dimension $m$, as elaborated in Lemma~\ref{lem: sg_eigen}. Such a property does not hold when $m=2$ under the original definition; see \cite[Th.~1]{Ryu:21}.
\begin{lemma}\label{lem: sg_eigen}
  For $A\in \cmm$, it holds that \bex \lambda_i(A)\in \sg(A)\eex for all $i=1, 2, \ldots, m$ and $\lambda_i(A)\neq 0$.  
\end{lemma}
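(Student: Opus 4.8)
The plan is to produce, for each nonzero eigenvalue $\lambda_i(A)$, a single explicit probing vector in the defining set of $\sg(A)$ whose gain--phase pair reproduces $\lambda_i(A)$ exactly. The natural candidate is an eigenvector associated with $\lambda_i(A)$, and this is precisely where the complex-space formulation~\eqref{eq: matrix_sg} is essential: eigenvectors of $A$ are in general complex, so they are admissible test vectors under our definition but not under the real-space definition of~\cite{Ryu:21}. This is exactly the feature that resolves the $m=2$ obstruction noted just before the lemma.

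First I would fix a nonzero $\lambda \coloneqq \lambda_i(A)$ and an eigenvector $0 \neq x \in \cm$ with $Ax = \lambda x$. Because $\lambda \neq 0$ and $x \neq 0$, we have $Ax = \lambda x \neq 0$, so $x$ satisfies both admissibility constraints in~\eqref{eq: matrix_sg}. The gain is then immediate, $\gamma(x, Ax) = \norm{\lambda x}_2 / \norm{x}_2 = \abs{\lambda}$. For the phase, using $\ininf{x}{Ax} = \rep(\lambda\, x^* x) = \norm{x}_2^2\,\rep(\lambda)$ gives $\theta(x, Ax) = \arccos\rbkt{\rep(\lambda)/\abs{\lambda}}$.

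The last step is to reconcile this with the polar form $\lambda = \abs{\lambda}\, e^{j\arg\lambda}$. Since $\rep(\lambda)/\abs{\lambda} = \cos(\arg\lambda)$ and $\arccos$ takes values in $\interval{0}{\pi}$, the phase equals $\abs{\arg\lambda}$, so the two points contributed by $x$ are $\abs{\lambda}\, e^{\pm j \abs{\arg\lambda}}$; the sign agreeing with that of $\arg\lambda$ returns $\lambda$ itself, proving $\lambda \in \sg(A)$. The argument is essentially a one-vector computation, so I do not expect a genuine obstacle; the only point needing care is the branch bookkeeping for $\arg\lambda$ against the $\pm$ in~\eqref{eq: matrix_sg}, which is what lets one eigenvector recover $\lambda$ irrespective of the sign of its imaginary part. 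The real conceptual content lives entirely in the first step --- recognizing that the complex probing vector is available.
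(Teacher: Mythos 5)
Your proof is correct and follows essentially the same route as the paper's: probe $\sg(A)$ with an eigenvector of $\lambda_i(A)$, compute $\gamma(x,Ax)=\abs{\lambda}$ and $\theta(x,Ax)=\arccos\rbkt{\rep(\lambda)/\abs{\lambda}}=\abs{\angle\lambda}$, and use the $\pm$ symmetry in \eqref{eq: matrix_sg} to recover $\lambda$ regardless of the sign of its argument. Your version merely adds explicit admissibility and branch-of-$\arg$ bookkeeping, which the paper leaves implicit.
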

\begin{proof}
 Let $\lambda\in \mathbb{C}$ be any nonzero eigenvalue of $A$ associated with an eigenvector $x\in \cm$. Based on the input vector $x$, we generate the corresponding point in $\sg(A)$:
\begin{align*}
  z&= \frac{\norm{\lambda x}_2}{\norm{x}_2} \exp\rbkt{ j\arccos\frac{\rep(x^* \lambda x)}{\norm{x}_2\norm{\lambda x}_2}}\\
  &= \abs{\lambda}\exp\rbkt{j \arccos\frac{\rep(\lambda)}{\abs{\lambda}}} = \abs{\lambda} \exp\rbkt{ j \abs{\angle{\lambda}}}.
\end{align*}
Therefore, we conclude that $\lambda \in  \{z, \bar{z}\}\subset \sg(A)$.
\end{proof}

A generalized version of Lemma~\ref{lem: sg_eigen} for linear operators has been recently established in {\cite[Th.~1(ii)]{Pates:21}}, while we provide a short proof above for completeness.

Given two matrices $A, B$, we show that the scaled graphs of $A$ and $B$ can be adopted to determine the full-rankness of $I-AB$.  The following \emph{matrix separation theorem} lays the foundation for dominance analysis of MIMO systems later. 
\begin{theorem}\label{lem:matrix}
  For $A, B\in \cmm$, if 
  \be\label{eq:lem_separation_sg}
  \sg(A) \cap\sg^\dagger(B)= \emptyset,
  \ee 
  then $\mathrm{det}(I-AB)\neq 0$.
\end{theorem}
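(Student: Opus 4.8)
The plan is to argue by contraposition: I assume $\det(I-AB)=0$ and construct a single complex number lying in both $\sg(A)$ and $\sg^\dagger(B)$, directly contradicting the separation hypothesis \eqref{eq:lem_separation_sg}. The point of the $\dagger$ (inverse) construction in \eqref{eq: matrix_isg} is precisely to make this possible: it swaps input and output so that the pair $(Bx,x)$ can be matched against the pair $(x,Ax)$ used in $\sg(A)$ once a feedback loop is closed.

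Concretely, if $\det(I-AB)=0$ then there is a nonzero $v\in\cm$ with $ABv=v$. I would set $w\coloneqq Bv$, so that $Aw=v$. First I would dispose of the degenerate case: if $w=0$ then $v=Aw=0$, a contradiction, hence $w\neq 0$; and $v\neq 0$ by construction. Thus all four of $w$, $v=Aw$, $v$, $w=Bv$ are nonzero, which is exactly what is needed for the generating conditions in \eqref{eq: matrix_sg} and \eqref{eq: matrix_isg} to be satisfied by the inputs I am about to use. Feeding $x=w$ into the definition of $\sg(A)$ produces the point $\gamma(w,v)\,e^{\pm j\theta(w,v)}\in\sg(A)$ since $Aw=v$; feeding $x=v$ into the definition of $\sg^\dagger(B)$ produces $\gamma(Bv,v)\,e^{\pm j\theta(Bv,v)}=\gamma(w,v)\,e^{\pm j\theta(w,v)}\in\sg^\dagger(B)$ since $Bv=w$.

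These two points coincide: the gains agree because $\gamma(w,Aw)=\norm{v}_2/\norm{w}_2=\gamma(Bv,v)$, and the phases agree because $\theta(Bv,v)=\theta(w,v)=\theta(w,Aw)$ using the symmetry $\theta(x,y)=\theta(y,x)$ noted in the excerpt. Selecting the same sign branch in the $e^{\pm j\theta(w,v)}$ term then exhibits a common element of $\sg(A)\cap\sg^\dagger(B)$, contradicting \eqref{eq:lem_separation_sg}. The whole argument is short, and I do not anticipate a genuine obstacle; the only care required is the bookkeeping of the nonzero hypotheses, which is why I would treat the $w=0$ case at the outset so that all the gains and phases appearing are well defined. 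The conceptual content is simply the observation that an eigenvector of $AB$ at eigenvalue $1$ closes a loop whose single gain/phase datum is forced simultaneously into both scaled graphs.
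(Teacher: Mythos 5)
Your proof is correct and is essentially the paper's own argument in contrapositive form: the paper takes a null vector of $\stbt{I}{B}{A}{I}$ (equivalently, an eigenvector of $AB$ at eigenvalue $1$) and shows the loop relations force a common point of $\sg(A)$ and $\sg^\dagger(B)$, exactly as you do with $v$, $w=Bv$, $Aw=v$. The only difference is cosmetic: by working with the positive-sign loop $ABv=v$ directly rather than the block-matrix relations $u_1=-Bu_2$, $u_2=-Au_1$, you avoid the sign-flip identities $\gamma(x,y)=\gamma(-x,y)$ and $\theta(-x,y)=\theta(x,-y)$ that the paper needs, and your explicit handling of the nonzero conditions is a point the paper leaves implicit.
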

\begin{proof}
    See Appendix. 
\end{proof}

One important observation from Theorem~\ref{lem:matrix} is that under the separation condition \eqref{eq:lem_separation_sg}, the $m$ eigenvalues of $AB$ never intersect with the critical point ``$1$''. This property is reminiscent of the classical Nyquist stability approach when a \emph{positive} feedback system is under consideration. Thanks to \eqref{eq: invariance}, condition~\eqref{eq:lem_separation_sg} is also robust under any unitary similarity transformation perturbation, namely, 
\bex 
\sg(U^* A U) \cap  \sg^\dagger(V^*BV) =  \sg(A) \cap\sg^\dagger(B)
\eex 
for all unitary matrices $U$ and $V$. 
Inspired by the analogy above, we propose to adopt the scaled graph as an alternative to the generalized Nyquist plot \cite{Desoer:80} of MIMO systems. 

\section{The Scaled Graph of MIMO Systems}\label{sec: systemSRG} 

Consider a MIMO LTI system $G\in \rlinf^{m\times m}$ of McMillan degree $n$. Such an system does not have any pole on the extended imaginary axis $j\mathbb{R}\cup\{\infty\}$. For all $\omega\in \interval{-\infty}{\infty}$, we define the (frequency-wise) scaled graph of $G$ to be 
\begin{multline}\label{eq: system_sg}
 \hspace{-3mm} \sg(G(\jw)) \coloneqq \left\{ z(\jw)=\gamma(x, G(\jw)x)  e^{\pm j\theta(x, G(\jw)x)} \in \bar{\mathbb{C}}  \right. \\
\left.  \mid \vphantom{e^{\pm j\theta(x, G(\jw)x)}}  0\neq x\in \cm,  G(\jw)x \neq 0 \right\}.
\end{multline}
The  (frequency-wise) inverse scaled graph $\sg^\dagger(G(\jw))$ can be defined similarly to \eqref{eq: matrix_isg}. In the case when $G$ is SISO, the scaled graph of $G$ recovers the Nyquist plot of $G$. 
\begin{proposition}\label{prop: sg_nyquist}
For $G\in \rlinf^{1\times 1}$, the two sets are equal:
\begin{enumerate}
 \renewcommand{\theenumi}{\textup{(\roman{enumi})}}\renewcommand{\labelenumi}{\theenumi}
\item The Nyquist plot: $\{G(\jw)\mid \omega\in \interval{-\infty}{\infty} \}$;
\item The union of the scaled graphs over $\omega\in \interval{-\infty}{\infty}$: $\{\sg(G(\jw))\mid \omega\in \interval{-\infty}{\infty}\}$.
\end{enumerate}
\end{proposition}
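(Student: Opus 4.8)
The plan is to reduce the claim to a single scalar computation together with the conjugate symmetry of the frequency response of a real-rational system. First I would fix $\omega\in \interval{-\infty}{\infty}$ and evaluate $\sg(G(\jw))$ for the scalar $g\coloneqq G(\jw)\in \mathbb{C}$. Since $G$ is SISO, the dimension is $m=1$ and every nonzero input $x\in \mathbb{C}$ produces the output $gx$, so the gain is $\gamma(x, gx)=\abs{g}$ independently of $x$, while the phase is $\theta(x, gx)=\arccos\rbkt{\rep(g)/\abs{g}}=\abs{\angle g}$. This is exactly the computation carried out in the proof of Lemma~\ref{lem: sg_eigen}, with the scalar $g$ playing the role of its own eigenvalue. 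Consequently every point of $\sg(G(\jw))$ has the form $\abs{g}e^{\pm j\abs{\angle g}}$, and a short case analysis on the sign of $\angle g$ shows that the two signs recover precisely $g$ and $\overline{g}$; that is, $\sg(G(\jw))=\{g, \overline{g}\}$ whenever $g\neq 0$.

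Second, I would assemble the union over all frequencies. Set~(ii) then equals $\bigcup_{\omega}\{G(\jw), \overline{G(\jw)}\}$, whereas the Nyquist plot in set~(i) is $\bigcup_{\omega}\{G(\jw)\}$. The inclusion (i)~$\subseteq$~(ii) is immediate, since $G(\jw)$ is always one of the two points of $\sg(G(\jw))$. For the reverse inclusion I would invoke the fact that $G\in \rlinf^{1\times 1}$ is real-rational: its coefficients are real, so $G(\overline{s})=\overline{G(s)}$ and hence $\overline{G(\jw)}=G(-\jw)$. Thus each conjugate point $\overline{G(\jw)}$ contributed by the $\pm$ symmetry of the scaled graph is itself the value of $G$ at the frequency $-\omega$, and therefore already lies on the Nyquist plot as $\omega$ ranges over $\interval{-\infty}{\infty}$. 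This yields (ii)~$\subseteq$~(i) and closes the set equality.

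The one point genuinely requiring care lies at frequencies where $G(\jw)=0$. There the admissibility requirement $G(\jw)x\neq 0$ in \eqref{eq: system_sg} makes $\sg(G(\jw))$ empty, even though the origin belongs to the Nyquist plot. Hence the clean computation above establishes the equality on $\bar{\mathbb{C}}\setminus\{0\}$ unconditionally, and the origin has to be treated as a removable boundary case --- either by adopting the convention that a vanishing gain contributes the point $0$, or by restricting attention to the generic situation in which $G$ has no imaginary-axis transmission zeros. Apart from this degenerate point there is no real obstacle: the proposition is essentially a consistency check that the frequency-wise scaled graph specializes to the Nyquist plot in the scalar case, with the $\pm$ symmetry of $\sg(G(\jw))$ about the real axis matched exactly by the conjugate symmetry $\overline{G(\jw)}=G(-\jw)$ of a real-rational frequency response.
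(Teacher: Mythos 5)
Your proof follows essentially the same route as the paper's: the scalar computation $\gamma(x, G(\jw)x)=\abs{G(\jw)}$, $\theta(x, G(\jw)x)=\abs{\angle G(\jw)}$ gives $\sg(G(\jw))=\{G(\jw), \overline{G(\jw)}\}$, and the conjugate symmetry $G(-\jw)=\overline{G(\jw)}$ of a real-rational transfer function collapses the union of the scaled graphs onto the Nyquist plot. Your caveat about frequencies where $G(\jw)=0$ is a genuine refinement that the paper's proof passes over silently: at such frequencies (including $\omega=\pm\infty$ for strictly proper $G$) the admissibility requirement $G(\jw)x\neq 0$ in \eqref{eq: system_sg} makes $\sg(G(\jw))$ empty while the Nyquist plot contains the origin, so the stated equality holds verbatim only away from the origin, exactly as you observe.
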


\begin{proof}
For all nonzero $x\in \mathbb{C}$, we have
\begin{align*}
\gamma(x, G(\jw)x)&=\frac{\abs{G(\jw)}\abs{x}}{\abs{x}}=\abs{G(\jw)},  \\
\theta(x, G(\jw)x)&=\arccos\frac{\rep( G(\jw))\abs{x}^2}{\abs{x}\abs{G(\jw)}\abs{x}}=\abs{\angle G(\jw)},
\end{align*}
where $\omega\in \interval{-\infty}{\infty}$. By definition~\eqref{eq: system_sg}, at each $\omega$, the set $\sg(G(\jw)) = \{{G(\jw)}, \overline{G(\jw)} \}$ contains a conjugate pair. It follows from the conjugate symmetry $G(-\jw)=\overline{G(\jw)}$ that $\{ \sg(G(\jw))\mid \omega\in \interval{-\infty}{\infty} \} = \{G(\jw)\mid \omega\in \interval{-\infty}{\infty} \}$.
\end{proof}

For a MIMO system $G\in \rlinf^\mm$, the \emph{generalized Nyquist plot} of $G$ is depicted by the \emph{$m$ continuous eigenloci}:
\bex 
\left\{ \lambda_i(G(\jw))  \mid \omega \in \interval{-\infty}{\infty}, i=1, 2, \ldots, m \right\},
\eex 
based on which the generalized Nyquist criterion \cite{Desoer:80} was successfully formulated half a century ago. However, it is well known that taking the eigenloci directly for systems analysis will unavoidably result in various drawbacks. Firstly, a practical system may not be precisely known due to its unmodeled dynamics.  Requiring precise information of the eigenloci of the system turns out to be \emph{highly unrealistic}. Secondly, it has been widely acknowledged that the eigenloci do not give reliable information -- they are neither good robust stability indicators nor robust performance indicators of feedback systems. A small perturbation of a system can lead to a large change in its eigenloci. Robust control is targeted at this scenario, where the system is often assumed to belong to a set of uncertain systems under some specific perturbations. The underlying requirement behind an appropriate uncertainty description is that the description has to \emph{bound the eigenloci}. For instance, the small gain analysis endowed with the set of gain-bounded uncertain systems is known to be at the heart of robust control theory. The gain-based analysis in \eqref{eq: small_gain} is owing to the eigenlocus-gain bound:
\bex 
\abs{\lambda_i(G(\jw))}\leq \overline{\sigma}(G(\jw)),
\eex 
where $i=1, 2, \ldots, m$. Nevertheless, using only the gain information for systems analysis is known to be highly conservative. Is it possible to subsume both the gain and phase information into systems analysis? The answer is affirmative. Owing to Lemma~\ref{lem: sg_eigen}, the scaled graph offers a new bound for the generalized Nyquist plot from a mixed gain/phase perspective. This indicates that the scaled graph may be viewed as a \emph{robust alternative} to the generalized Nquist plot for MIMO systems.
\begin{proposition} 
  For $G\in \rlinf^{m\times m}$, it holds that
  \bex 
 {\lambda_i(G(\jw))} \in \sg(G(\jw)),
\eex 
where  $i=1, 2, \ldots, m$ and $\lambda_i(G(\jw))\neq 0$.
\end{proposition}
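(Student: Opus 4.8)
The plan is to recognize this proposition as the pointwise (frequency-wise) specialization of the matrix result already proved in Lemma~\ref{lem: sg_eigen}, rather than as a genuinely new statement requiring fresh machinery. The key observation is that because $G\in \rlinf^{m\times m}$ has no poles on the extended imaginary axis, the frequency response $G(\jw)$ is a well-defined complex matrix in $\cmm$ at every $\omega\in \interval{-\infty}{\infty}$. Consequently the whole question reduces to a statement about a single matrix, evaluated frequency by frequency.

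First I would fix an arbitrary $\omega\in \interval{-\infty}{\infty}$ and set $A\coloneqq G(\jw)\in \cmm$. Then I would compare the two defining expressions directly: the frequency-wise scaled graph in \eqref{eq: system_sg} is literally obtained from the matrix scaled graph in \eqref{eq: matrix_sg} by the substitution $A = G(\jw)$, since both are built from the same gain map $\gamma(x, A x)$ and phase map $\theta(x, Ax)$ over nonzero $x\in \cm$ with $Ax\neq 0$. Hence $\sg(G(\jw))$ is exactly $\sg(A)$ for this choice of $A$, and no reconciliation of notation is needed beyond noting this identity.

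With that identification in hand, I would simply invoke Lemma~\ref{lem: sg_eigen} applied to $A = G(\jw)$: every nonzero eigenvalue $\lambda_i(A)$ lies in $\sg(A)$, which by the preceding identity is precisely the assertion $\lambda_i(G(\jw)) \in \sg(G(\jw))$ for all $i$ with $\lambda_i(G(\jw))\neq 0$. Since $\omega$ was arbitrary, the conclusion holds at every frequency, completing the argument.

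I do not anticipate a genuine obstacle here; the only point deserving a sentence of care is the well-definedness of $G(\jw)$ as a matrix in $\cmm$, which is guaranteed by the hypothesis $G\in \rlinf^{m\times m}$ (no imaginary-axis poles). The substance of the result — that a matrix's nonzero eigenvalues are captured by its scaled graph — has already been discharged in Lemma~\ref{lem: sg_eigen}, so this proposition is essentially a corollary and its proof should be correspondingly short.
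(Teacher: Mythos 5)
Your proposal is correct and follows exactly the paper's own route: the paper proves this proposition by observing that it ``follows directly from Lemma~\ref{lem: sg_eigen} in a frequency-wise manner,'' which is precisely your argument of fixing $\omega$, identifying $A = G(\jw) \in \cmm$, and invoking the matrix result. Your additional remarks on the well-definedness of $G(\jw)$ and the literal agreement of definitions \eqref{eq: system_sg} and \eqref{eq: matrix_sg} are sound elaborations of the same one-line proof.
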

\begin{proof}
  The result follows directly from Lemma~\ref{lem: sg_eigen} in a frequency-wise manner. 
\end{proof}

\begin{example}
  Consider a $3\times 3$ system given by
    \bex\label{eq:example}
    G(s)=  U^*(s)\mathrm{diag}\rbkt{\frac{3s+6}{s^2+s+1},\frac{2s+2}{s-1},\frac{s+10}{s^2-2s-2}} U(s),
    \eex
  where $U(s)$ can be any stable all-pass system satisfying that $U^*(\jw) U(\jw)=I$ for all $\omega \in \interval{-\infty}{\infty}$. Clearly $G$ is a $2$-dominant system. The scaled graph and generalized Nyquist plot of $G$ for $\omega\in \interval{0}{100}~\mathrm{rad/s}$ are sketched in Fig.~\ref{fig: exmp}.  
\end{example}

\begin{figure}[htb]
  \centering
   \includegraphics[width=0.8\linewidth, trim={0.6cm 0.6cm 0.6cm 0.6cm}]{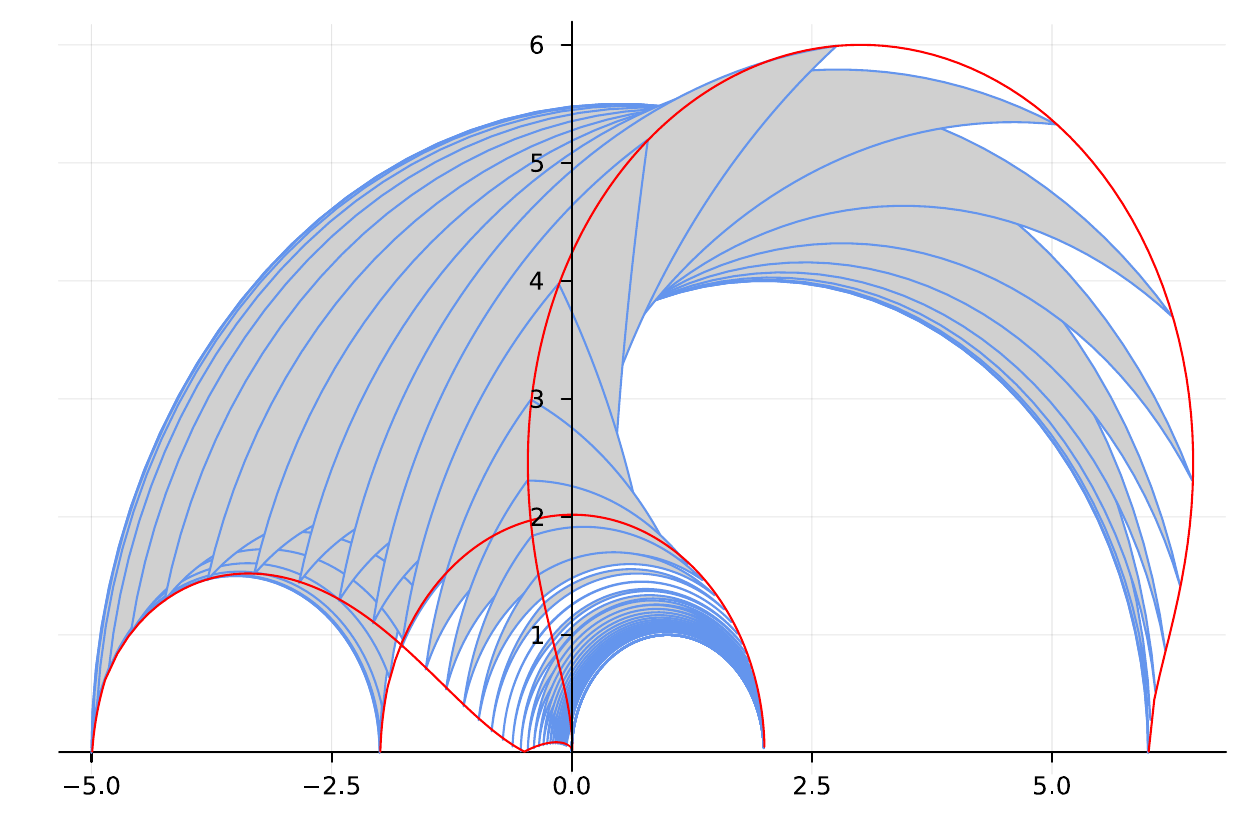}
 \caption{A sketch of the scaled graph of $G$ over $\omega\in \interval{0}{100}~\mathrm{rad/s}$ in the upper half complex plane, within which the generalized Nyquist plot ($3$ continuous eigenloci) of $G$ lies (red).}\label{fig: exmp} 
\end{figure}

\begin{remark}
 For a system $G\in \rlinf^{m\times m}$, the scaled graph $\sg(G(\jw))$ is \emph{less conservative} than the recent operator scaled graph ${\mathbf{SG}}(G)$ proposed in \cite{Chaffey:21j, Pates:21} due to the following two reasons. First, the former is a frequency-wise definition, while the latter is not. For a SISO system $G$, it has been proved in \cite[Th.~4]{Chaffey:21j} that ${\mathbf{SG}}(G)$ is the hyperbolic convex hull of the Nyquist plot of $G$. In light of Proposition~\ref{prop: sg_nyquist}, one can immediately obtain that
 \bex 
 \{ \sg(G(\jw))\mid \omega\in \interval{-\infty}{\infty} \}\subset {\mathbf{SG}}(G),
 \eex 
 where the set inclusion relation is \emph{strict}. Second, $\sg(G(\jw))$ is perfectly suited for studying possibly unstable LTI systems, e.g. $G(s)=\frac{1}{1-s}$. This shares the same flavor with the Nyquist plot since both of them make full use of the frequency-wise nature of LTI systems with \emph{no regard to stability or instability}. By contrast, ${\mathbf{SG}}(G)$ is an operator-theoretic concept. For unstable systems, how to define, interpret and sketch ${\mathbf{SG}}(G)$ still remains unclear.
\end{remark}

\begin{remark}
 When defining $\sg(G(\jw))$ in \eqref{eq: system_sg}, the premise of restricting $G\in \rlinf^\mm$ is due to a technical reason for the sake of brevity of our presentation. Note that $G(\jw)$ is a well-defined matrix since $G$ has no pole on $j\mathbb{R}\cup \{\infty\}$. This premise can be \emph{removed}. We can generally allow $G\in \mathcal{R}^\mm$ in \eqref{eq: system_sg}. In such a case, we have to take semicircular indentations around all the $\jw$-axis poles of $G$ and construct the indented Nyquist contour $\Gamma$. Based on the contour, we can define $\sg(G(s))$ for all $s\in \Gamma$ analogously to definition~\eqref{eq: system_sg} for all $\omega$. This is known to be a standard remedy in frequency-domain analysis of LTI systems; see \cite{Desoer:80,Chen:23arxiv}. Finally, we remark that our main result (Theorem~\ref{thm:lti_instability}) proposed below can be tailored to fit such a generalized definition.
\end{remark}

\section{Main Results}\label{sec: main}
This section presents our main result for feedback dominance analysis. Under a separation condition characterized by scaled graphs, dominance of a closed-loop system can be deduced from that of open-loop components. It is shown that several existing criteria for feedback stability and dominance analysis, e.g., the small gain theorem and passivity theorem, can be recovered from the main result. A comparison is made between the scaled graph and the principal region \cite{Postlethwaite:81}.

\subsection{A Graphical Dominance Theorem}
To resolve Problem~\ref{problem}, we aim at deriving a sufficient condition on $P$ and $C$ by using their scaled graphs to determine the dominance of the feedback system $\goflti$. 

An assumption on feedback systems is required before we move on. A feedback system $\gof$ is said to have \emph{no unstable pole-zero cancellation} if the number of poles of $PC$ in $\ccp$ is equal to the sum of that of $P$ in $\ccp$ and that of $C$ in $\ccp$. Throughout, all feedback systems are reasonably assumed to be free of unstable pole-zero cancellations since otherwise a feedback loop can generate unstable internal signals, which is highly undesirable in practice. Under this assumption, we can derive the following simple characterization of Definition~\ref{def: feedback_dominance}.
  
\begin{lemma}\label{lem: dominance_feedback}
  Consider $P, C\in\rlinf^{m\times m}$ and suppose that there is no unstable pole-zero cancellation in $\goflti$. Then the following two statements are equivalent:
  \begin{enumerate} \renewcommand{\theenumi}{\textup{(\roman{enumi})}}\renewcommand{\labelenumi}{\theenumi}
    \item  The feedback system $\goflti$ is $p$-dominant.
    \item  The transfer matrix $(I-PC)^{-1}$  is $p$-dominant.
   \end{enumerate}
\end{lemma}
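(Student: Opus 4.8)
The plan is to recast both statements (i) and (ii) as a single $\cop$-pole-counting claim and to show that the closed-loop transfer matrix shares its $\cop$-poles with the $(2,2)$-block $(I-PC)^{-1}$. First I would apply Schur-complement block inversion to $M \coloneqq \tbt{I}{-C}{-P}{I}$, using well-posedness (full rank of $I-P(\infty)C(\infty)$, equivalently invertibility of the Schur complement $I-PC$ as a rational matrix) to obtain
\be
M^{-1} = \tbt{I+C(I-PC)^{-1}P}{C(I-PC)^{-1}}{(I-PC)^{-1}P}{(I-PC)^{-1}},
\ee
together with the determinant identity $\det M = \det(I-PC)$. The structural observation I would lean on is the factorization
\be
M^{-1} = \tbt{I}{0}{0}{0} + \tbo{C}{I}(I-PC)^{-1}\obt{P}{I},
\ee
which displays $(I-PC)^{-1}$ as the $(2,2)$-block of $M^{-1}$ while realizing the remaining blocks through pre- and post-multiplication by $P, C \in \rlinf^\mm$.

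Next I would settle the $\rlinf$-membership equivalence. Since $(I-PC)^{-1}$ is a principal sub-block of $M^{-1}$, any $j\mathbb{R}$-pole of the former is one of the latter; conversely the factorization shows every pole of $M^{-1}$ lies among the poles of $(I-PC)^{-1}$, $P$, and $C$, and as $P, C\in\rlinf^\mm$ contribute no $j\mathbb{R}$-poles, $M^{-1}\in\rlinf^{2m\times 2m}$ exactly when $(I-PC)^{-1}\in\rlinf^{m\times m}$. The same two inclusions reduce the dominance part to one statement: the number of $\cop$-poles of $M^{-1}$ equals that of $(I-PC)^{-1}$. The sub-block relation shows $M^{-1}$ has at least as many $\cop$-poles as $(I-PC)^{-1}$, so the entire content is the reverse inequality.

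For the reverse inequality I would use the factorization to note that every $\cop$-pole of $M^{-1}$ is a $\cop$-pole of $(I-PC)^{-1}$, $P$, or $C$, so it suffices to rule out spurious contributions from the unstable open-loop poles. The heuristic is that an unstable pole of $P$ or $C$ is cancelled by the loop -- near such a pole the blocks $C(I-PC)^{-1}$ and $(I-PC)^{-1}P$ remain bounded, so it becomes a zero rather than a pole of $M^{-1}$. The hard part, and the main obstacle, is to convert this heuristic into a rigorous multiplicity count and to exclude hidden cancellations in the three off-blocks; this is exactly where the hypothesis of no unstable pole-zero cancellation is indispensable, since it certifies that the $\ccp$-modes of $P$ and $C$ are neither annihilated in forming $PC$ nor rendered uncontrollable or unobservable in the loop. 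Concretely I would pin down the closed-loop characteristic polynomial as $\phi_{cl}\doteq \phi_P\,\phi_C\,\det(I-PC)$ (equality up to a nonzero constant) and thereby identify the $\cop$-poles of $M^{-1}$ with those of $(I-PC)^{-1}$. I expect to discharge this either through a minimal state-space realization of the closed loop, whose $A$-matrix has characteristic polynomial $\phi_{cl}$, or through Smith--McMillan/coprime-factorization bookkeeping, after which the two $\cop$-pole counts coincide and the equivalence (i) $\Leftrightarrow$ (ii) follows.
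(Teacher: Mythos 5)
First, a point of comparison: the paper does not actually spell out a proof of this lemma---its ``proof'' is a one-line deferral to the reasoning of \cite[Th.~5.7]{Zhou:96}---so your proposal has to be judged against that intended pole-counting argument, and its architecture is indeed the right one. The block inversion of $\tbt{I}{-C}{-P}{I}$, the factorization exhibiting $(I-PC)^{-1}$ as the $(2,2)$-block, the resulting equivalence of $\rlinf$-membership of the closed loop and of $(I-PC)^{-1}$, and the sub-block inequality (the closed loop has at least as many $\cop$-poles, with multiplicity, as $(I-PC)^{-1}$) are all correct, as is your diagnosis that the whole content of the lemma is the reverse inequality and that this is where the no-unstable-pole-zero-cancellation hypothesis must enter.

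The genuine gap is that the decisive step is announced rather than executed, at exactly the place you flag as ``the hard part,'' and the two routes you offer for it are left in an unusable state. (i) Your appeal to ``a minimal state-space realization of the closed loop, whose $A$-matrix has characteristic polynomial $\phi_{cl}\doteq\phi_P\phi_C\det(I-PC)$'' conflates two facts: what is standard is that the \emph{interconnection} of minimal realizations of $P$ and $C$ has characteristic polynomial $\phi_{cl}$; that this interconnection is itself minimal is a separate claim needing a PBH-type argument (it does hold here, because $w_1,w_2$ enter at both summing junctions and both loop signals $u_1,u_2$ are outputs, so by invertibility of the feedthrough map any uncontrollable or unobservable closed-loop mode would force one in the minimal realization of $P$ or of $C$). (ii) More importantly, even granting that the $\cop$-poles of the closed loop are exactly the $\cop$-roots of $\phi_{cl}$, you still owe the identity that the number of $\cop$-roots of $\phi_{cl}$ equals the number of $\cop$-poles of $(I-PC)^{-1}$, and this---not the realization step---is where the hypothesis does its work. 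The missing chain is: the $\cop$-zero count minus the $\cop$-pole count of the scalar function $\det(I-PC)$ equals the Smith--McMillan zero count minus the Smith--McMillan pole count of $I-PC$ in $\cop$; the Smith--McMillan poles of $I-PC$ are those of $PC$, which number exactly $p_1+p_2$ precisely under the no-unstable-cancellation hypothesis; and the Smith--McMillan zeros of $I-PC$ in $\cop$ are precisely the $\cop$-poles of $(I-PC)^{-1}$. Substituting these three facts into $\phi_{cl}=\phi_P\phi_C\det(I-PC)$ yields the required identity, which together with your inequalities closes the sandwich and proves the lemma. As written, your proposal stops one essential step short of this: it is a correct and completable plan, not yet a proof.
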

\begin{proof}
The proof follows from the similar reasons as those in \cite[Th.~5.7]{Zhou:96} and is omitted for brevity.
\end{proof}

We now state the main result of this paper below as an answer to Problem~\ref{problem}.
\begin{theorem}[SG dominance theorem]\label{thm:lti_instability}
Let $P\in\rlinf^{m\times m}$ be $p_1$-dominant and $C\in\rlinf^{m\times m}$ be $p_2$-dominant. Suppose that there is no unstable pole-zero cancellation in $\goflti$. Then $\goflti$ is $(p_1+p_2)$-dominant if 
\be\label{eq:large_sg_frequency}
    \sg(\tau P(\jw))\cap\sg^\dagger(C(\jw))=\emptyset  
\ee 
for all $\omega\in \interval{-\infty}{\infty}$ and  $\tau \in \interval[open left]{0}{1}$.
\end{theorem}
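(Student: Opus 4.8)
The plan is to combine Lemma~\ref{lem: dominance_feedback}, which reduces feedback dominance to counting the open right-half-plane (RHP) poles of $(I-PC)^{-1}$, with a multivariable Nyquist (argument-principle) count, closed off by a homotopy in the loop-gain parameter $\tau$ that invokes the matrix separation theorem (Theorem~\ref{lem:matrix}). First I would apply Lemma~\ref{lem: dominance_feedback}: under the no-unstable-pole-zero-cancellation hypothesis, $\goflti$ is $(p_1+p_2)$-dominant if and only if $(I-PC)^{-1}\in\rlinf^{m\times m}$ and has exactly $p_1+p_2$ poles in $\cop$. Since $P,C\in\rlinf^{m\times m}$ have no poles on $j\mathbb{R}\cup\{\infty\}$, the Nyquist $D$-contour needs no indentations and reduces to the compactified imaginary axis.

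Next I would set up the pole count through the factorization $\det(I-P(s)C(s))=\phi_{cl}(s)/\phi_{ol}(s)$, where $\phi_{ol}$ and $\phi_{cl}$ are the open- and closed-loop characteristic polynomials. The no-unstable-pole-zero-cancellation assumption guarantees that $PC$ has exactly $p_1+p_2$ poles in $\cop$ (the RHP roots of $\phi_{ol}$) and that the RHP roots of $\phi_{cl}$ coincide with the RHP poles of $(I-PC)^{-1}$. The argument principle applied along the $D$-contour then yields
\begin{equation*}
\#\{\text{RHP poles of }(I-PC)^{-1}\}=(p_1+p_2)+n,
\end{equation*}
where $n$ is the net number of encirclements of the origin (with the standard Nyquist orientation) by the closed curve $\omega\mapsto\det(I-P(\jw)C(\jw))$. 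It therefore suffices to prove $n=0$.

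To show $n=0$, I would deform the loop gain via the homotopy $H(\tau,\omega)\coloneqq\det(I-\tau P(\jw)C(\jw))$ for $(\tau,\omega)\in\interval{0}{1}\times\interval{-\infty}{\infty}$. Applying Theorem~\ref{lem:matrix} with $A=\tau P(\jw)$ and $B=C(\jw)$, the separation hypothesis~\eqref{eq:large_sg_frequency} gives $H(\tau,\omega)\neq0$ for every $\tau\in\interval[open left]{0}{1}$ and every $\omega$, while $H(0,\cdot)\equiv1\neq0$. Hence $H$ is a continuous, nowhere-vanishing homotopy in $\mathbb{C}\setminus\{0\}$ connecting the constant curve $1$ (winding number $0$) at $\tau=0$ to the Nyquist curve $\det(I-P(\jw)C(\jw))$ at $\tau=1$; by homotopy invariance of the winding number, $n=0$. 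The separation at $\omega=\infty$ simultaneously delivers $\det(I-P(\infty)C(\infty))\neq0$, i.e. well-posedness, and together with $\det(I-P(\jw)C(\jw))\neq0$ at finite $\omega$ this ensures $(I-PC)^{-1}\in\rlinf^{m\times m}$. Combining the above, $(I-PC)^{-1}$ is $(p_1+p_2)$-dominant, and Lemma~\ref{lem: dominance_feedback} completes the proof.

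The hard part, I expect, is the bookkeeping in the argument-principle step: establishing rigorously that the RHP zeros of $\det(I-PC)$ equal the RHP poles of $(I-PC)^{-1}$ under the no-unstable-pole-zero-cancellation assumption (ruling out spurious cancellations through the adjugate), and that winding-number invariance genuinely applies on the \emph{compactified} frequency axis — that is, that $H(\tau,\cdot)$ closes up at $\pm\infty$ for each $\tau$ and varies continuously through the point at infinity. The role of the full range $\tau\in\interval[open left]{0}{1}$ is precisely to keep this homotopy off the origin; Theorem~\ref{lem:matrix} is the device that converts the geometric separation of scaled graphs into the analytic non-vanishing of $H$, and is the crux that forces the encirclement count to collapse to zero.
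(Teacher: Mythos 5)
Your proof is correct and takes essentially the same route as the paper's: reduce to $(I-PC)^{-1}$ via Lemma~\ref{lem: dominance_feedback}, count encirclements of the origin by $\det(I-P(\jw)C(\jw))$ via the argument principle, and kill the encirclement count using the $\tau$-homotopy $\det(I-\tau P(\jw)C(\jw))$ kept away from zero by Theorem~\ref{lem:matrix}. The only cosmetic difference is that you invoke homotopy invariance of the winding number directly (and explicitly note well-posedness at $\omega=\infty$), whereas the paper runs the same idea as a contradiction: a nonzero count plus continuity of the zeros in $(\tau,\omega)$ would force $\det(I-\tau_0 P(\jw_0)C(\jw_0))=0$ for some $\tau_0\in\interval[open left]{0}{1}$, contradicting the separation condition.
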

\begin{proof}
    See Appendix.
\end{proof}

Theorem~\ref{thm:lti_instability} provides a graphical test for feedback dominance analysis from a frequency-sweep perspective. The test boil downs to computation of the scaled graphs of complex matrices. Such computation can be performed from either numerical sampling or the established connection \cite[Th.~1]{Pates:21} with the computation of numerical ranges. For SISO systems, note that condition~\eqref{eq:large_sg_frequency} reduces to 
\bex 
\tau P(\jw)C(\jw) \eqqcolon \tau L(\jw) \neq 1
\eex 
for all $\omega\in \interval{-\infty}{\infty}$ and $\tau\in \interval[open left]{0}{1}$. The smallest distance between $\tau L(\jw)$ and ``$1$''  provides a robustness indicator for $p$-dominance. Thanks to \eqref{eq:large_sg_frequency}, we now have a MIMO version of such a robustness measure:
\begin{definition} 
The $p$-dominance margin of a feedback system $\gof$ is defined by
  \bex 
  \mathrm{dm}(\gof)\coloneqq \inf_{\substack{z\in \sg(\tau P(\jw))\\ \omega\in \interval{-\infty}{\infty}, \tau \in \interval[open left]{0}{1}}} \mathrm{dist}(z, \sg^\dagger(C(\jw))).
  \eex 
\end{definition}
For SISO nonlinear Lur'e feedback systems, an alternative definition of $p$-dominance margins was studied in \cite{Padoan:19} using the circle criterion. 

We next elaborate on the differences between Theorem~\ref{thm:lti_instability} and an existing operator scaled graph result  \cite[Th.~2]{Chaffey:21j}. 
\begin{enumerate}
\renewcommand{\theenumi}{\textup{(\roman{enumi})}}\renewcommand{\labelenumi}{\theenumi}
  \item Theorem~\ref{thm:lti_instability} aims at feedback dominance analysis, while \cite[Th.~2]{Chaffey:21j} is a feedback stability result. How to generalize \cite[Th.~2]{Chaffey:21j} for dominance analysis is nontrivial.
  \item When restricting open-loop systems in \cite[Th.~2]{Chaffey:21j} to be LTI, Theorem~\ref{thm:lti_instability} is less conservative in feedback stability analysis, as partially explained in Example~\ref{exmp: SG_literature}.
\end{enumerate}

The following example indicates that Theorem~\ref{thm:lti_instability} can be generally less conservative than operator-theoretic separation results for feedback dominance.

\begin{example}\label{exmp: SG_literature}
Consider $2$-dominant $P(s)=\frac{1}{(s-1)^2(s+2)}$ and $1$-dominant $C(s)=\frac{2s+2}{s-2}$. The feedback system $(I-P(s)C(s))^{-1} P(s)= \frac{s-2}{s^4-2s^3-3s^2+6s-6}$ is $3$-dominant. One can observe that the Nyquist plot of $P$ and that of ${C}^{-1}$ have an ``intersection'' at $\frac{1}{2}$ from an operator-theoretic viewpoint. However, according to Theorem~\ref{thm:lti_instability}, the feedback system is indeed $3$-dominant and the ``intersection'' here is quite misleading. In fact, from a frequency-wise perspective, the ``intersection'' disappears due to $\sg^\dagger(C(\infty))= \sg(P(0))=\frac{1}{2}$ at \emph{different} frequencies $0$ and $\infty$. Such a case is hard to be exposed in operator-theoretic settings.
\end{example}

\subsection{Robust Dominance Analysis under Stable Uncertainties}

Insipred by the small gain analysis \cite[Ch.~9]{Zhou:96}, we present a specialization below to demonstrate how to leverage Theorem~\ref{thm:lti_instability} for robust dominance analysis. Let $\mathbb{S}_k\subset \mathbb{C}$ be a simply connected closed region such that $\mathbb{S}_k$ is symmetric with respect to the real axis. The set of all such regions in $\mathbb{C}$ is denoted by $\mathcal{S}=\cup_{k} \mathbb{S}_k$. Define a region-valued mapping by $\rho: \interval{-\infty}{\infty} \to \mathcal{S}$. Given a mapping $\rho$, we define the set of stable and minimum-phase uncertainties: 
\begin{align*}
  \mathbf{\Delta}_{\rho}^{m\times m}\coloneqq \{ \Delta \in \rhinf^{m\times m} \mid \sg(\Delta(\jw))\subset \rho(\omega) \}.
\end{align*}
Intuitively, the scaled graph of $\Delta$ is supposed to be covered by a known frequency-wise mapping $\rho$. Next, consider an uncertain system $P_\Delta$ subject to \emph{additive} uncertainties:
\begin{align*}
  P_\Delta(s)= P(s) + \Delta(s), 
\end{align*}
where $P \in \mathcal{R}^{m\times m}$ is known and $\Delta \in \mathbf{\Delta}_{\rho}^{m\times m}$. Consider an uncertain feedback system $P_\Delta\,\#\, C$, where $C\in \mathcal{R}^{m\times m}$ is known. Such a loop $P_\Delta\,\#\, C$ can be reorganized into a standard one $\Delta\,\#\, G$, where $G\coloneqq (I-CP)^{-1}C$ is the nominal feedback system which is known. Hence, the inverse scaled graph of $(I-CP)^{-1}C$ becomes available and can be handily frequency-wise sketched. Applying Theorem~\ref{thm:lti_instability} to $\Delta\,\#\, G$ yields the following corollary for robust analysis:
\begin{corollary}\label{cor: additive}
Suppose that $\Delta \in \mathbf{\Delta}_{\rho}^{m\times m}$ and the nominal feedback system $G=(I-CP)^{-1}C$ is $p$-dominant. The uncertain feedback system $P_\Delta\,\#\, C$ is $p$-dominant if 
$$\inf_{z\in \sg^\dagger(G(\jw))} \mathrm{dist}(z, \tau \rho(\omega))>0$$
for all $\omega \in \interval{-\infty}{\infty}$ and $\tau \in \interval{0}{1}$.
\end{corollary}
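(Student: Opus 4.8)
The plan is to reduce Corollary~\ref{cor: additive} to the SG dominance theorem (Theorem~\ref{thm:lti_instability}) applied to the reorganized loop $\Delta\,\#\,G$, in which the stable uncertainty $\Delta$ plays the role of $P$ and the known nominal system $G=(I-CP)^{-1}C$ plays the role of $C$; note both lie in $\rlinf^{m\times m}$, since $\Delta\in\rhinf^{m\times m}$ and $G$ is $p$-dominant by hypothesis. The first step is to make the reorganization precise at the level of return differences. Using $\det(I-XY)=\det(I-YX)$ repeatedly together with $G=(I-CP)^{-1}C$, one obtains
\[
\det(I-\Delta G)=\det(I-G\Delta)=\frac{\det(I-CP_\Delta)}{\det(I-CP)}=\frac{\det(I-P_\Delta C)}{\det(I-PC)},
\]
because $I-G\Delta=(I-CP)^{-1}\bigl[(I-CP)-C\Delta\bigr]$ and $I-CP_\Delta=I-CP-C\Delta$; hence $\det(I-P_\Delta C)=\det(I-PC)\,\det(I-\Delta G)$. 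Since $\Delta$ is $0$-dominant and $G$ is $p$-dominant, Theorem~\ref{thm:lti_instability} will (once its separation hypothesis is checked) certify $\Delta\,\#\,G$ to be $(0+p)=p$-dominant, and the identity above is the vehicle that transfers this conclusion back to $P_\Delta\,\#\,C$.

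Next I would verify that the distance condition of the corollary is exactly the separation hypothesis \eqref{eq:large_sg_frequency} for the pair $(\Delta,G)$. The key elementary fact is the positive homogeneity of the scaled graph: for any $\tau>0$ and any $M\in\cmm$ one has $\gamma(x,\tau Mx)=\tau\,\gamma(x,Mx)$ while $\theta(x,\tau Mx)=\theta(x,Mx)$, whence $\sg(\tau M)=\tau\,\sg(M)$. Combined with $\Delta\in\mathbf{\Delta}_{\rho}^{m\times m}$, i.e.\ $\sg(\Delta(\jw))\subset\rho(\omega)$, this gives $\sg(\tau\Delta(\jw))=\tau\,\sg(\Delta(\jw))\subset\tau\rho(\omega)$ for every $\omega$ and every $\tau\in\interval[open left]{0}{1}$. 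The hypothesis $\inf_{z\in\sg^\dagger(G(\jw))}\dist{z}{\tau\rho(\omega)}>0$ says precisely that $\sg^\dagger(G(\jw))\cap\tau\rho(\omega)=\emptyset$ with a uniform gap, so a fortiori $\sg(\tau\Delta(\jw))\cap\sg^\dagger(G(\jw))=\emptyset$. This is \eqref{eq:large_sg_frequency} with $P\mapsto\Delta$ and $C\mapsto G$; observe that the corollary imposes the condition on the closed interval $\interval{0}{1}$, which is (slightly) stronger than the $\interval[open left]{0}{1}$ required by Theorem~\ref{thm:lti_instability}, so the reduction is valid.

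With the separation hypothesis in hand, Theorem~\ref{thm:lti_instability} applied to $\Delta\,\#\,G$ yields that this loop is $p$-dominant, equivalently (by Lemma~\ref{lem: dominance_feedback}) that $(I-\Delta G)^{-1}$ is $p$-dominant. The final step is to pass from $\Delta\,\#\,G$ back to $P_\Delta\,\#\,C$, again through Lemma~\ref{lem: dominance_feedback}, which reduces the claim to showing that $(I-P_\Delta C)^{-1}$ and $(I-\Delta G)^{-1}$ carry the same number of poles in $\cop$. I expect this pole-bookkeeping to be the main obstacle: the determinant identity matches the $\cop$-zeros of the two return differences, but one must additionally confirm -- using well-posedness and the no-unstable-pole-zero-cancellation hypotheses transported from $P_\Delta\,\#\,C$ to $\Delta\,\#\,G$, together with the fact that the poles of $G$ already encode the nominal closed-loop dynamics of $\gof$ -- that the reorganization neither creates nor destroys $\cop$-poles. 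Granting this standard LFT pole-counting, the $p$-dominance of $\Delta\,\#\,G$ transfers to $P_\Delta\,\#\,C$, which completes the proof.
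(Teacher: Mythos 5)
Your proposal is correct and takes essentially the same route as the paper: the paper's entire proof is the one-liner ``The proof follows directly from Theorem~\ref{thm:lti_instability}'', resting on the reorganization of $P_\Delta\,\#\,C$ into $\Delta\,\#\,G$ described in the surrounding text, with $\Delta$ ($0$-dominant, since $\Delta\in\rhinf^{m\times m}$) in the role of $P$ and $G$ ($p$-dominant) in the role of $C$. Your additional details --- the determinant identity $\det(I-P_\Delta C)=\det(I-PC)\,\det(I-\Delta G)$, the homogeneity $\sg(\tau\Delta(\jw))=\tau\,\sg(\Delta(\jw))\subset\tau\rho(\omega)$, and the pole bookkeeping needed to transfer dominance between the two loop arrangements --- simply make explicit what the paper leaves implicit, including the one step (equivalence of dominance of the two loops) that the paper treats as immediate under the phrase ``can be reorganized''.
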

\begin{proof}
    The proof follows directly from Theorem~\ref{thm:lti_instability}.
\end{proof}

Similar statements to Corollary~\ref{cor: additive} apply to other types of uncertainties, such as multiplicative uncertainties.

\subsection{Specialization of the Main Result}

Feedback stability can be viewed as a $0$-dominance problem. As a consequence, a useful graphical test on feedback stability can be derived handily from Theorem~\ref{thm:lti_instability}.
\begin{corollary}\label{thm:lti_stability}
For $P, C \in \rhinf^{\mm}$, the feedback system $\gof$ is stable if 
  \bex 
  \sg(\tau P(\jw))\cap\sg^\dagger( C(\jw))=\emptyset  
    \eex 
    for all $\omega\in \interval{-\infty}{\infty}$ and  $\tau \in \interval[open left]{0}{1}$.
\end{corollary}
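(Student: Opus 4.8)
The plan is to obtain Corollary~\ref{thm:lti_stability} as an immediate specialization of Theorem~\ref{thm:lti_instability} to the case $p_1=p_2=0$. The first step is to recast the hypotheses into the language of dominance. Since $\rhinf^\mm\subset\rlinf^\mm$, both $P$ and $C$ lie in $\rlinf^\mm$ as required by the theorem. Moreover, as recorded after Definition~\ref{def: open-loop_dominance}, a system in $\rlinf^\mm$ is $0$-dominant precisely when it belongs to $\rhinf^\mm$; hence the hypothesis $P,C\in\rhinf^\mm$ is equivalent to saying that $P$ is $0$-dominant and $C$ is $0$-dominant, so I would set $p_1=p_2=0$.

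Second, I would check that the standing assumption of no unstable pole-zero cancellation is satisfied automatically in this setting, so that it need not appear in the corollary. Because $P$ and $C$ are both stable, their product $PC$ is again stable, whence the number of poles of $PC$ in $\ccp$ is zero; this coincides with the sum of the (zero) numbers of poles of $P$ in $\ccp$ and of $C$ in $\ccp$. Thus the pole-zero cancellation hypothesis of Theorem~\ref{thm:lti_instability} holds vacuously and imposes no additional restriction.

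Third, I would observe that the separation hypothesis of the corollary, $\sg(\tau P(\jw))\cap\sg^\dagger(C(\jw))=\emptyset$ for all $\omega\in\interval{-\infty}{\infty}$ and $\tau\in\interval[open left]{0}{1}$, is literally condition~\eqref{eq:large_sg_frequency}. Applying Theorem~\ref{thm:lti_instability} then yields that the feedback system $\gof$ is $(p_1+p_2)=0$-dominant. To finish, I would interpret this conclusion: by Definition~\ref{def: feedback_dominance} together with the same $0$-dominance/$\rhinf$ equivalence used above (now applied to the $2m\times 2m$ closed-loop transfer matrix), $0$-dominance of the loop means exactly that this closed-loop matrix lies in $\rhinf^{2m\times 2m}$, i.e., the feedback system is (internally) stable. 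This completes the argument.

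Since the result is a direct corollary, I do not expect a genuine obstacle; essentially all the content is carried by Theorem~\ref{thm:lti_instability}. The only points demanding care are the two reductions via the $0$-dominance characterization — using it at the open-loop level to legitimately set $p_1=p_2=0$, and again at the closed-loop level to translate the conclusion back into ordinary stability — and the brief verification that the no-unstable-pole-zero-cancellation assumption is automatic for stable $P$ and $C$.
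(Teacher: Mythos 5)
Your proposal is correct and follows exactly the paper's own route: the paper proves this corollary in one line by invoking Theorem~\ref{thm:lti_instability} with $p_1=p_2=0$. Your additional verifications — that the no-unstable-pole-zero-cancellation hypothesis is automatic for stable $P$ and $C$, and that $0$-dominance of the closed loop translates back to $\rhinf$ stability via the equivalence stated after Definition~\ref{def: open-loop_dominance} — are sound and merely make explicit what the paper leaves implicit.
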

 \begin{proof}
The proof is completed by invoking Theorem~\ref{thm:lti_instability} with $p_1=p_2=0$.
 \end{proof}

The scaled graph with Theorem~\ref{thm:lti_instability} and Corollary~\ref{thm:lti_stability} can be further interpreted from gain, phase and passivity perspectives. Specifically, given $P\in \rhinf^{\mm}$, the maximum singular value of $P(\jw)$ can be \emph{read out} from $\sg(P(\jw))$ via 
$$
\overline{\sigma}(P(\jw)) =  \sup_{z\in \sg(P(\jw))} \abs{z},
$$
i.e., the maximum radius at each frequency. This is in contrast to the operator scaled graph ${\mathbf{SG}}(P)$ in 
\cite{Chaffey:21j}, whose maximum radius is the induced $2$-norm (aka $\mathcal{H}_\infty$-norm) of $P$. The maximum phase of $P(\jw)$ (or the singular angle \cite{Chen:21_angle}) can be similarly read out via 
$$
\psi(P(\jw)) = \sup_{z\in \sg(P(\jw))} \abs{\angle{z}},
$$
i.e., the maximum absolute value of the argument at each frequency. We may also translate positive realness (aka passivity) \cite[Sec.~2.7]{Anderson:73} into a scaled graph description. A system $P$ is called positive real (resp. strongly positive real) if $P(\jw)+P(\jw)^*\geq 0$ (resp. $>0$) for all $\omega \in \interval{-\infty}{\infty}$. In this case, a right half-plane description $\sg(P(\jw))\subset \ccp$ (resp. $\sg(P(\jw))\subset \cop$) can be obtained. Note that the above graphical descriptions of the gain, phase and passivity can be naturally extended to unstable systems in $\rlinf^\mm$. Historically they were called the conditional gain ($\mathcal{L}_\infty$-norm) \cite{Takeda:73} and generalized positive realness \cite{Anderson:68} as explained in Example~\ref{ex: intro}. This leads to the following result.

\begin{corollary}\label{thm: gain_phase_passivity}
Let $P\in\rlinf^{m\times m}$ be $p_1$-dominant and $C\in\rlinf^{m\times m}$ be $p_2$-dominant. Suppose that there is no unstable pole-zero cancellation in $\goflti$. Then $\goflti$ is $(p_1+p_2)$-dominant if for all $\omega\in \interval{-\infty}{\infty}$, any of the following conditions holds:
  \begin{enumerate}
    \renewcommand{\theenumi}{\textup{(\roman{enumi})}}\renewcommand{\labelenumi}{\theenumi}
   \item \label{item: small_gain} a small gain condition:
   $\overline{\sigma}(P(\jw)) \overline{\sigma}(C(\jw))<1;$ 
\item \label{item: small_phase} a small phase condition: $\psi(P(\jw)) + \psi(-C(\jw))<\pi;$
\item a positive real condition: 
$P(\jw) + P(\jw)^* \geq  0$ and $C(\jw) + C(\jw)^* <0$.
\end{enumerate}
\end{corollary}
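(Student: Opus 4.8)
The plan is to derive each of the three conditions (i)--(iii) as a special case of the scaled-graph separation hypothesis \eqref{eq:large_sg_frequency} of Theorem~\ref{thm:lti_instability}, so that the $(p_1+p_2)$-dominance conclusion follows at once from that theorem. The common engine is the behaviour of the scaled graph under positive scaling: for any $\tau>0$ and any admissible $x$, the argument is unchanged, $\theta(x,\tau P(\jw)x)=\theta(x,P(\jw)x)$, while the gain scales linearly, $\gamma(x,\tau P(\jw)x)=\tau\,\gamma(x,P(\jw)x)$. Thus $\sg(\tau P(\jw))$ is obtained from $\sg(P(\jw))$ by a radial dilation by $\tau$ that preserves all arguments. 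I would then treat the three conditions by three different geometric separations: radial, angular, and half-plane.

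For the small gain condition (i), I would separate by modulus. Each $z\in\sg(\tau P(\jw))$ satisfies $\abs{z}=\tau\norm{P(\jw)x}_2/\norm{x}_2\le\tau\,\overline{\sigma}(P(\jw))\le\overline{\sigma}(P(\jw))$, the last step using $\tau\le 1$; each $w\in\sg^\dagger(C(\jw))$ satisfies $\abs{w}=\norm{x}_2/\norm{C(\jw)x}_2\ge 1/\overline{\sigma}(C(\jw))$. The hypothesis $\overline{\sigma}(P(\jw))\,\overline{\sigma}(C(\jw))<1$ is exactly $\overline{\sigma}(P(\jw))<1/\overline{\sigma}(C(\jw))$, so the two sets occupy disjoint annuli and \eqref{eq:large_sg_frequency} holds for every $\tau\in\interval[open left]{0}{1}$.

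For the small phase condition (ii), I would separate by argument. Because positive scaling fixes angles, $\sg(\tau P(\jw))$ is contained in the symmetric sector $\bbkt{z:\abs{\angle z}\le\psi(P(\jw))}$. For the inverse graph, the reflection identity $\theta(x,-C(\jw)x)=\pi-\theta(x,C(\jw)x)$ combined with $\theta(C(\jw)x,x)=\theta(x,C(\jw)x)$ shows that the minimum argument attained on $\sg^\dagger(C(\jw))$ equals $\pi-\psi(-C(\jw))$, so $\sg^\dagger(C(\jw))\subset\bbkt{w:\abs{\angle w}\ge\pi-\psi(-C(\jw))}$. The hypothesis $\psi(P(\jw))+\psi(-C(\jw))<\pi$ renders these sectors disjoint, yielding \eqref{eq:large_sg_frequency}. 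For the positive real condition (iii), I would use a half-plane separation: $P(\jw)+P(\jw)^*\ge 0$ gives $\ininf{x}{P(\jw)x}\ge 0$, hence $\theta(x,P(\jw)x)\le\pi/2$ and, angles being scale-invariant, $\sg(\tau P(\jw))\subset\ccp$; dually $C(\jw)+C(\jw)^*<0$ gives $\ininf{x}{C(\jw)x}<0$, hence every argument on $\sg^\dagger(C(\jw))$ strictly exceeds $\pi/2$ and $\sg^\dagger(C(\jw))$ lies in the open left half-plane $\mathbb{C}_-$. The closed right and open left half-planes are disjoint, so \eqref{eq:large_sg_frequency} holds once more.

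The individual calculations are short, and the only place requiring real care is the phase case. The hard part will be converting the given bound on the \emph{maximum} phase $\psi(-C(\jw))$ of $-C$ into the correct statement about the \emph{minimum} argument of $\sg^\dagger(C(\jw))$ via the reflection identity, while simultaneously checking that the radial dilation by $\tau$ neither introduces new arguments nor distorts the sector. I would also flag the mild asymmetry that (ii) and (iii) in fact hold for every $\tau>0$, whereas (i) genuinely relies on $\tau\le 1$; this is consistent with the range $\tau\in\interval[open left]{0}{1}$ demanded by Theorem~\ref{thm:lti_instability} and causes no further difficulty.
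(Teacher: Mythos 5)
Your proposal is correct and follows exactly the route the paper intends: the paper omits the proof, stating only that it ``can be established from the graphical interpretations of the three conditions,'' and your argument fills in precisely those details---disk versus disk-exterior separation for the small gain case, sector versus complementary-sector separation for the small phase case (via the reflection identity $\theta(x,-C(\jw)x)=\pi-\theta(x,C(\jw)x)$), and closed-right versus open-left half-plane separation for the positive real case---each yielding condition~\eqref{eq:large_sg_frequency} so that Theorem~\ref{thm:lti_instability} applies. Your observation that only the small gain case actually uses $\tau\le 1$, while the angular separations are invariant under all positive scalings, is a correct and worthwhile refinement.
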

\begin{proof}
The proof is omitted since it can be established from the graphical interpretations of the three conditions.
\end{proof} 

A related condition of $0$-dominance in Corollary~\ref{thm: gain_phase_passivity} has been developed independently in \cite{Baron-Prada:25}. Unstable systems are explicitly excluded from the analysis in \cite{Baron-Prada:25}. By contrast, our result represents a significant generalization, allowing unstable open-loop and closed-loop systems.

\subsection{Links to the Principal Region}
We next elaborate on the differences between the scaled graph and the {principal region} \cite{Postlethwaite:81}, a pioneering concept introduced by the British School half a century ago. The principal region is a frequency-wise curvilinear rectangle drawn from the principal gain and principal phase.

We start with some preliminaries on the matrix principal gain and principal phase. Consider the polar decomposition of a matrix $A\in \cmm$:
 \begin{align}\label{eq: polar}
  A = U Q,
 \end{align}
where $U\in \cmm$ is unitary and $Q\in \cmm$ is positive semi-definite uniquely determined by $Q= \sqrt{A^*A}$. It is known that \eqref{eq: polar} is related to the singular value decomposition $A  =  V \Sigma W^*$, where $U = V W^*$ and $Q = W \Sigma W^*$. The $m$ \emph{principal gains} of $A$ are defined to be the $m$ eigenvalues of $Q$: $\sigma_i(A)= \lambda_i(Q) = \lambda_i(\sqrt{A^* A})$,
where $i=1, 2, \ldots, m$, that is, the singular values of $A$. The $m$ \emph{principal phases} of $A$ are defined as the arguments of the $m$ eigenvalues of $U$: $\phi_i(A) = \angle \lambda_i(U) = \angle \lambda_i(V W^*)$,
where $i=1, 2, \ldots, m$. Denote by $\overline{\sigma}(A)$ and $\underline{\sigma}(A)$ the 
\emph{largest and smallest principal gain} of $A$, respectively, and by $\overline{\phi}(A)$ and $\underline{\phi}(A)$ the \emph{largest and smallest principal phase} of $A$.

For the analysis purpose of a MIMO system, utilizing the quadruplet $\overline{\sigma}(\cdot)$, $\underline{\sigma}(\cdot)$, $\overline{\phi}(\cdot)$ and $\underline{\phi}(\cdot)$ is sufficient to bound the system's eigenloci. Concretely, for $\omega \in \interval{0}{\infty}$, the \emph{principal region} of $G\in \rhinf^{\mm}$ is defined to be the following set: 
\begin{multline}\label{eq: principal gain_phase_1}
\hspace{-3mm}  \mathrm{PR}(G(\jw)) \coloneqq 
  \left\{ z\in \mathbb{C}\mid \abs{z}\in \interval{\underline{\sigma}(G(\jw))}{\overline{\sigma}(G(\jw))},  \right.\\
    \left. \angle z \in \interval{\underline{\phi}(G(\jw))}{\overline{\phi}(G(\jw))}  \right\}, 
\end{multline}
if $\overline{\phi}(G(\jw))-\underline{\phi}(G(\jw))<\pi$; 
\begin{align}\label{eq: principal gain_phase}
 \hspace{-2.5mm} \mathrm{PR}(G(\jw)) \coloneqq 
  \left\{ z\in \mathbb{C}\mid \abs{z}\in \interval{\underline{\sigma}(G(\jw))}{\overline{\sigma}(G(\jw))} \right\},  
\end{align}
otherwise. At each frequency $\omega$, intuitively the principal region $\mathrm{PR}(G(\jw))$ shapes like a \emph{curvilinear rectangle} whenever the phase spread constraint $\overline{\phi}(G(\jw))-\underline{\phi}(G(\jw))<\pi$ is satisfied and otherwise it shrinks to an \emph{annular region} determined by the principal gains only; see \cite[Fig.~4]{Postlethwaite:81}. It was shown in \cite[Ths.~1-2]{Postlethwaite:81} that the principal region $\mathrm{PR}(G(\jw))$ contains the $m$ eigenloci of $G$, that is,
\begin{align*}
  \lambda_i(G(\jw)) \in \mathrm{PR}(G(\jw))
\end{align*}
for all $\omega \in \interval{-\infty}{\infty}$ and $i=1, 2, \ldots, m$. Hence, the principal region may also be viewed as a robust version of the generalized Nyquist plot \cite[Sec.~V]{Postlethwaite:81}. 

We next pinpoint the major differences between the scaled graph and principal region. Firstly, the scaled graph is defined for systems in $\rlinf^\mm$, while the principal region is often restricted to $\rhinf^\mm$. Secondly, the phase information in the scaled graph and that in the principal region are different due to distinct phase definitions. One of the restrictions of the principal region is that for some frequencies, it can reduce to an annular region as in \eqref{eq: principal gain_phase}. Lastly, the gain/phase integration in the scaled graph comes from a complex scalar generated by a pair of inputs and outputs, while the principal region adopts a curvilinear rectangle which can bring extra conservatism in systems analysis. Here is an example showing that the scaled graph can be less conservative in feedback dominance analysis.

\begin{example}\label{ex:1d_example}
Consider a feedback system $G\,\#\, (-I)$, where 
$$
G(s)=\tbt{\frac{1}{(s+1)^2}}{\frac{1}{s+1}}{0}{\frac{0.9}{(s+1)^2(s-1)}}$$
is $1$-dominant, with scaled graph shown in Figure~\ref{fig:1d_example}. By Theorem~\ref{thm:lti_instability}, the feedback system is also $1$-dominant, as the scaled graph of $G$ does not intersect the point ``$-1$''.

\begin{figure}
    \centering
\includegraphics[width=0.8\linewidth, trim={0.6cm 0.6cm 0.6cm 0.6cm}]{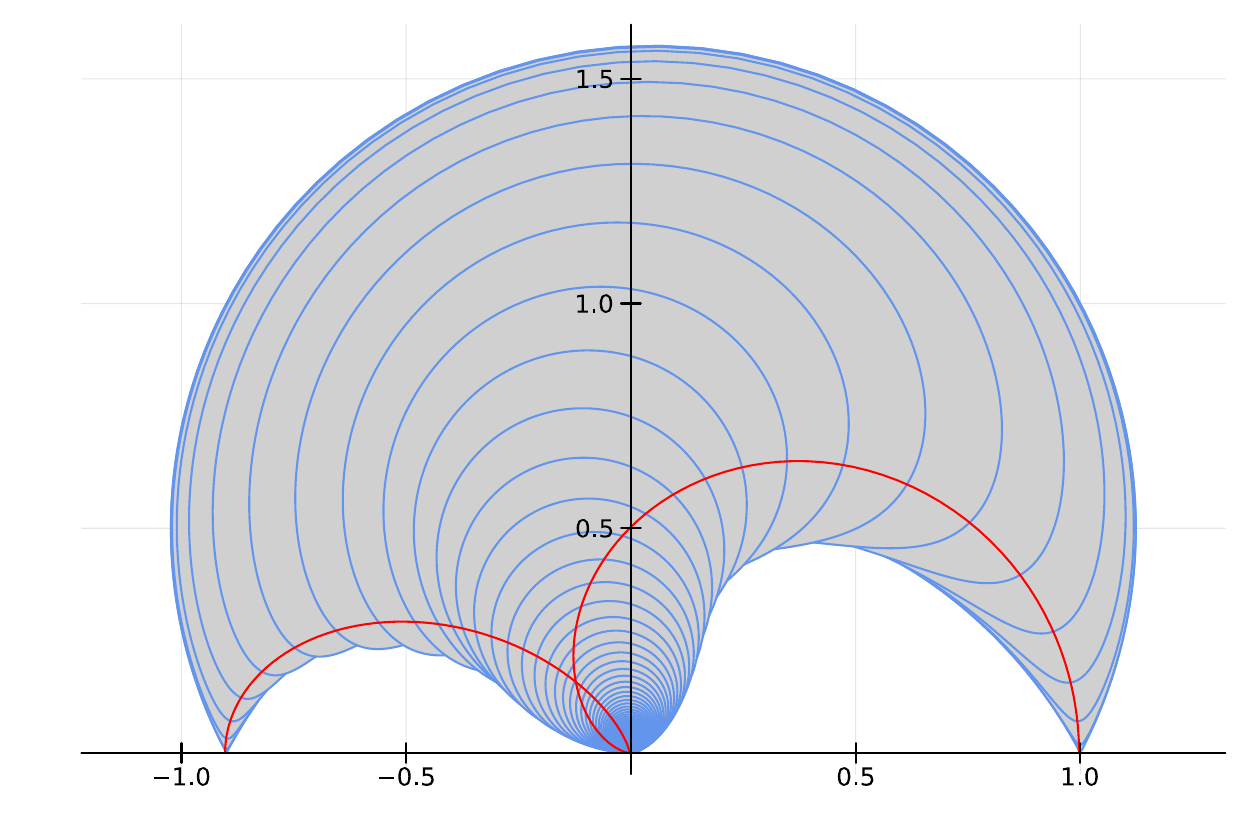}
    \caption{The scaled graph of $G$ in Example~\ref{ex:1d_example} in the upper half-plane, with the generalized Nyquist plot in red.}
    \label{fig:1d_example}
\end{figure}
\end{example}

For a fair comparison, notice that one may first extend the principal region definition in~\eqref{eq: principal gain_phase_1} and \eqref{eq: principal gain_phase} to $p$-dominant systems. Even so, if we examine the principal region, rather than the scaled graph, we cannot draw the same conclusion for $1$-dominance.  At DC, we have
\begin{equation*}
    G(0) = \tbt{1}{1}{0}{-0.9}.
\end{equation*}
The principal phases of $G(0)$ are $\overline{\phi}(G(0)) = \pi$ and $\underline{\phi}(G(0)) = 0$, and the principal gains are $\overline{\sigma}(G(0)) = 1.57$ and $\underline{\sigma}(G(0)) = 0.57$.  The resulting annulus contains the point ``$-1$'', so we cannot conclude anything about the encirclements of the point ``$-1$'' by the eigenloci of $G$ from the plot of the principal region.

\section{Conclusion}\label{sec:conclusion}
In this paper, we defined a frequency-wise notion of the scaled graph for $p$-dominant MIMO LTI systems. The proposed scaled graph sketched in a complex plane provides an alternative and robust version of the generalized Nyquist plot. The proposed main result is a graphical dominance criterion for feedback systems analysis which relies on separation of the scaled graphs of open-loop $p$-dominant systems. A detailed comparison was further made between the scaled graph and the existing principal region.

Our proposed framework provides a fresh perspective on systems robust dominance analysis. It is our hope that the framework opens a door for conducting local robustness analysis of complicated behaviors in nonlinear biological networks subject to dynamic uncertainties. How to adapt the framework to robust analysis of neuromorphic and spiking control systems \cite{Sepulchre_22spiking} is currently under investigation.

\section*{Acknowledgment}
This work was initiated and partially completed when the first author was affiliated with the Department of Electrical Engineering (ESAT), KU Leuven, Belgium, and the second author was affiliated with the Department of Engineering, University of Cambridge, UK and supported by Pembroke College, University of Cambridge, UK. The first author would like to express special thanks to Sei Zhen Khong for useful discussions.

\appendix
 
\emph{Proof of Theorem~\ref{lem:matrix}}:~ 
It suffices to show that the block matrix $\stbt{I}{B}{A}{I}$ has full rank. Let $u=\stbo{u_1}{u_2}\in \mathrm{Null} \rbkt{\stbt{I}{B}{A}{I}}$ and note that 
\be\label{eq:lem_feedback}
u_1 = -Bu_2~\text{and}~u_2=-Au_1.
\ee 
We next prove that $u=0$ must hold by contradiction. 
  
Suppose that $u\neq 0$. It follows from condition \eqref{eq:lem_separation_sg} that for all $z_1\in \sg(A)$ and $z_2\in \sg^\dagger(B)$, we have $z_1\neq z_2$. Equivalently, we have $\abs{z_1-z_2}>0$. Without loss of generality, assume that $z_1$ and $z_2$ are both in the complex upper half-plane, i.e., $\angle z_1, \angle z_2\in \interval{0}{\pi}$. Then by definition \eqref{eq: matrix_sg}, we have 
  \be  \label{eq:lem_separation}
  \hspace{-2mm} \abs{\gamma(u_1, Au_1) e^{j\theta(u_1, Au_1)} - \gamma(Bu_2, u_2) e^{j\theta(Bu_2, u_2)} }>0. 
  \ee 
  Substituting \eqref{eq:lem_feedback} into \eqref{eq:lem_separation} gives that 
  \begin{align}
  0<&~\left| \gamma(-Bu_2, Au_1) e^{j\theta(-Bu_2, Au_1)} \right. \notag\\
  &\hspace{15mm} \left.- \gamma(Bu_2, -Au_1) e^{j\theta(Bu_2, -Au_1)} \right|\notag\\
  =&~\left| \gamma(Bu_2, Au_1) e^{j\theta(-Bu_2, Au_1)} \right. \notag\\
  & \hspace{15mm}\left.- \gamma(Bu_2, Au_1) e^{j\theta(-Bu_2, Au_1)} \right|, \label{eq:lem_contradiction}
  \end{align}
where the last equality is due to the following identities:
\begin{align*}
  \gamma(x, y)&=\gamma(x, -y)=\gamma(-x, y)\\
  \theta(-x, y)&=\theta(x, -y)
\end{align*}
for all $x, y\in \cm$.
However, the right-hand side of \eqref{eq:lem_contradiction} is obviously equal to zero, which causes a contradiction in terms of \eqref{eq:lem_separation}. Consequently, it holds that $u=0$. This completes the proof.  \hspace*{\fill}~\QED

\emph{Proof of Theorem~\ref{thm:lti_instability}}:~ 
Denote the dominance of $\goflti$ by $p$. First, since there is no unstable pole-zero cancellation in $\goflti$, it follows from Lemma~\ref{lem: dominance_feedback} that showing the $p$-dominance of $\goflti$ is equivalent to showing the $p$-dominance of $(I-PC)^{-1}$. In addition, the cascaded system $PC$ is $(p_1+p_2)$-dominant and so is $I-PC$.
  
Next, we prove that $p=p_1+p_2$ must hold under \eqref{eq:large_sg_frequency} by contradiction. Suppose that $p\neq p_1+p_2$. Denote the difference by $k\coloneqq p-(p_1+p_2)\neq 0$. Note that $PC$ is $(p_1+p_2)$-dominant and $(I-PC)^{-1}$ is $p$-dominant. Since $P, C\in \rlinf^\mm$, we can construct a Nyquist contour $\Gamma=j\mathbb{R}\cup \{\infty\}$. By Cauchy's argument principle,  when $s$ travels up the contour $\Gamma$, the scalar-valued function $\phi(s)\coloneqq \mathrm{det}(I-P(s)C(s))$ encircles the origin $k$ times clockwise  if $k>0$ or counterclockwise if $k<0$. Equivalently,  a $k$ number of encirclement of the critical point $1+j0$ is made by the eigenloci $\lambda_i(P(s)C(s))$ for all $i=1, 2, \ldots, m$ and $s\in \Gamma$.
  
On the one hand, by continuity of the poles of the closed-loop system and the fact that $P(s)C(s)$ has no pole on $s\in \Gamma$, the zeros of $\mathrm{det}(I-\tau P(s)C(s))$ are continuous both in $\tau\in \interval[open left]{0}{1}$ and $s\in \Gamma$. It follows from the nonzero encirclement of the origin that there exists some $\omega_0\in \interval{-\infty}{\infty}$ and some  $\tau_0\in \interval[open left]{0}{1}$ such that $\mathrm{det}(I-\tau_0 P(\jw_0)C(\jw_0))$ exactly intersects the origin, that is,  
\be\label{eq:thm_LTI_contradiction}
\mathrm{det}(I-\tau_0 P(\jw_0) C(\jw_0))=0.
\ee
 
On the other hand, by the separation condition \eqref{eq:large_sg_frequency}, for all $\omega\in \interval{-\infty}{\infty}$ and $\tau \in\interval[open left]{0}{1}$, for all points $z_1\in \sg(\tau P(\jw))$ and $z_2\in \sg^{\dagger}(C(\jw))$, we have $z_1\neq z_2$.  By invoking Lemma~\ref{lem:matrix}, we have
\bex
\mathrm{det}(I-\tau P(\jw) C(\jw))\neq 0
\eex 
for all $\omega\in \interval{-\infty}{\infty}$ and $\tau \in\interval[open left]{0}{1}$. This obviously contradicts  condition~\eqref{eq:thm_LTI_contradiction}. The established contradiction above indicates that $k=0$ and thus the feedback system $\gof$ is $(p_1+p_2)$-dominant. \hspace*{\fill}~\QED
 
\bibliographystyle{IEEEtran}
\bibliography{mybib}

\end{document}